\newtheorem{theorem}{Theorem}
\newtheorem{definition}[theorem]{Definition}
\newtheorem{lemma}[theorem]{Lemma}
\newtheorem{proposition}[theorem]{Proposition}
\newtheorem{remark}[theorem]{Remark}
\newenvironment{proof}[1][Proof]{\textbf{#1.} }{\ \rule{0.5em}{0.5em}}
\begin{document}

\title{Non-abelian Weyl Commutation Relations and the Series Product of Quantum
Stochastic Evolutions}
\author{D. Gwion Evans\thanks{%
Institute for Mathematics and Physics, Aberystwyth University, SY23 3BZ,
Wales, United Kingdom. Email: dfe@aber.ac.uk.}, John E. Gough\thanks{
Institute for Mathematics and Physics, Aberystwyth University, SY23 3BZ,
Wales, United Kingdom. Email: jug@aber.ac.uk.}, Matthew R. James\thanks{%
ARC Centre for Quantum Computation and Communication Technology, Research
School of Engineering, Australian National University, Canberra, ACT 0200,
Australia. Email: Matthew.James@anu.edu.au.}}
\date{}
\maketitle

\begin{abstract}

We show that the series product, which serves as an algebraic rule for
connecting state-based input/output systems, is intimately related to the
Heisenberg group and the canonical commutation relations. The series product
for quantum stochastic models then corresponds to a non-abelian generalization of
the Weyl commutation relation. We show that the series product gives  the
general rule for combining the generators of quantum stochastic evolutions using
a Lie-Trotter product formula.
\end{abstract}

\section{Introduction}
The aim of this paper is to make some striking connections between the rules
for combining models in series in control system theory and the Weyl
commutation relations. In the process, we develop a more intrinsic view of the unitary adapted
processes of Hudson and Parthasarathy \cite{HP} as non-abelian versions of the
Weyl unitaries - where the non-abelian nature arises from the presence of the initial space.
Our starting point is a surprising connection between
the theory of classical linear state space models and the canonical
commutation relations.

\subsection{State-Based Input/Output Systems}
Let $\mathcal{X}, \mathcal{U} $ and $\mathcal{Y}$ be finite dimensional vector spaces over the reals.
A controlled flow on the state space $\mathcal{X}$ is given by the
dynamical equations
\begin{equation*}
\dot{x}=v\left( x,u\right)
\end{equation*}
where $u$ is a $\mathcal{U}$-valued function of time called the input process. An output process $y$ taking values in $\mathcal{Y}$ is given by some relation of the general form

\begin{equation*}
y= h\left( x,u\right).
\end{equation*}

The situation is sketched in figure \ref{fig:1}, along with the case where we further decompose the value spaces into subspaces.

\begin{figure}[tbph]
\centering
\includegraphics[width=1.00\textwidth]{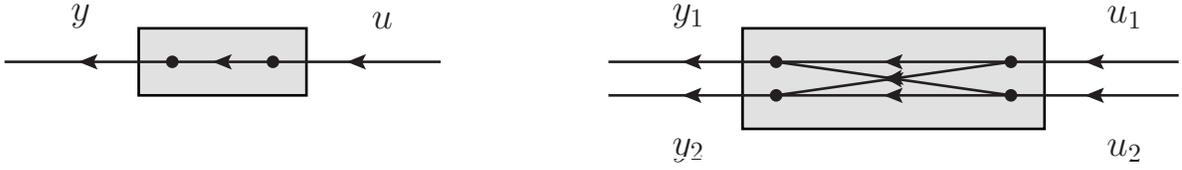}
\caption{The left-hand picture sketches an input-state-output model $\left(
\mathcal{U},\mathcal{X},\mathcal{Y} \right) $ corresponding to the system of equations (\ref{eq:simple}).
On the right we consider decompositions of the input and output value spaces,
 $\mathcal{U}=\mathcal{U}_1 \oplus \mathcal{U}_2$ and $\mathcal{Y} =\mathcal{Y}_1 \oplus \mathcal{Y}_2$ respectively.}
\label{fig:1}
\end{figure}

\subsection{Linear Systems}

We consider a vector input $u\left( \cdot \right) $ leading to a vector
output $y\left( \cdot \right) $ according to the model
\begin{equation}
\left\{
\begin{array}{c}
\dot{x}=Kx+Lu; \\
y=Mx+Nu;
\end{array}
\right.  \label{eq:simple}
\end{equation}
or
\begin{equation*}
\left[
\begin{array}{c}
\dot{x} \\
y
\end{array}
\right] =\mathbf{V}\left[
\begin{array}{c}
x \\
u
\end{array}
\right] \text{, where }\mathbf{V}=\left[
\begin{array}{cc}
K & M \\
L & N
\end{array}
\right] .
\end{equation*}
Here $x\left( \cdot \right) $ is the state vector state, initialized at some
value $x_{0}$, and $\mathbf{V}$ is referred to as the \textit{model matrix}
for the model. For $u\left( \cdot \right) $ integrable, the solution can be
written immediately as $y\left( t\right) =Nu\left( t\right)
+\int_{0}^{t}Me^{K\left( t-s\right) }Lu(s)ds+Me^{Kt}x_{0}$: we also note
that the input-output relation is described by the transfer function $%
T\left( s\right) =N+M\left( sI-K\right) ^{-1}L$ which is determined from the
model matrix. The situation is sketched in the top left picture in figure
\ref{fig:lie_trotter0b}.

As the inputs and outputs are vector-valued they may be further decomposed
as say $u=\left[
\begin{array}{c}
u_{1} \\
u_{2}
\end{array}
\right] $ and $y=\left[
\begin{array}{c}
y_{1} \\
y_{2}
\end{array}
\right] $. This is sketched on the right in figure \ref
{fig:lie_trotter0b}. The model matrix is then
\begin{equation}
\mathbf{V}=\left[
\begin{array}{cc}
K & [M_{1},M_{2}] \\
\left[
\begin{array}{c}
L_{1} \\
L_{2}
\end{array}
\right] & \left[
\begin{array}{cc}
N_{11} & N_{12} \\
N_{21} & N_{22}
\end{array}
\right]
\end{array}
\right] .  \label{eq:model_matrix_decomp}
\end{equation}
In each case we have a port for each input/output. The lines external to the
block represent an input or output, while the lines internal to the block
correspond to a non-zero entry $N_{ij}$ connect input port $j$ to output
port $i$. The picture on the bottom of figure \ref{fig:lie_trotter0b}
sketches the situation where $N_{12}=N_{21}=0$.

\begin{figure}[tbph]
\centering
\includegraphics[width=0.50\textwidth]{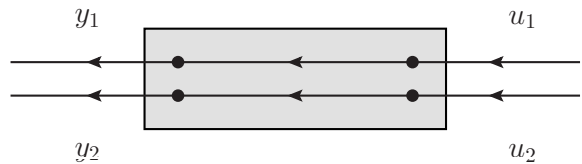}
\caption{The picture sketches the situation in (\ref{eq:model_matrix_decomp}) where $N_{12}=N_{21}=0$.}
\label{fig:lie_trotter0b}
\end{figure}

\subsection{Concatenation}

Suppose that we have a pair of such models with the same state space (with
variable $x$) and model matrices $\mathbf{V}_{i}=\left[
\begin{array}{cc}
K_{i} & M_{i} \\
L_{i} & N_{i}
\end{array}
\right] $, that is,
\begin{equation*}
\left[
\begin{array}{c}
\dot{x} \\
y_{2}
\end{array}
\right] =\mathbf{V}_{2}\left[
\begin{array}{c}
x \\
u_{2}
\end{array}
\right] ,\quad \left[
\begin{array}{c}
\dot{x} \\
y_{1}
\end{array}
\right] =\mathbf{V}_{1}\left[
\begin{array}{c}
x \\
u_{1}
\end{array}
\right] .
\end{equation*}

We may superimpose the two models to get the \textit{concatenated} model
\begin{equation*}
\left\{
\begin{array}{l}
\dot{x}=\left( K_{1}+K_{2}\right) x+M_{1}u_{1}+M_{2}u_{2}, \\
y_{1}=L_{1}x+N_{1}u_{1}, \\
y_{2}=L_{2}x+N_{2}u_{2},
\end{array}
\right.
\end{equation*}
- writing $v_i (x) = K_i x +M_i u_i $ for the separate state velocity fields
($i=1,2$), the concatenation rule effectively takes the combined velocity
field
\begin{equation}
v(x) =v_1 (x) +v_2 (x).  \label{eq: sum}
\end{equation}
At the level of model matrices, this corresponds to the rule (see figure \ref{fig:lie_trotter0a})
\begin{equation}
\mathbf{V}_{1}\boxplus \mathbf{V}_{2}\triangleq \left[
\begin{array}{ccc}
K_{1}+K_{2} & M_{1} & M_{2} \\
L_{1} & N_{1} & 0 \\
L_{2} & 0 & N_{2}
\end{array}
\right] ,\quad \left[
\begin{array}{c}
\dot{x} \\
y_{1} \\
y_{2}
\end{array}
\right] =\mathbf{V}_{1}\boxplus \mathbf{V}_{2}\left[
\begin{array}{c}
x \\
u_{1} \\
u_{2}
\end{array}
\right] .  \label{eq:concat_sum}
\end{equation}

\begin{figure}[tbph]
\centering
\includegraphics[width=1.00\textwidth]{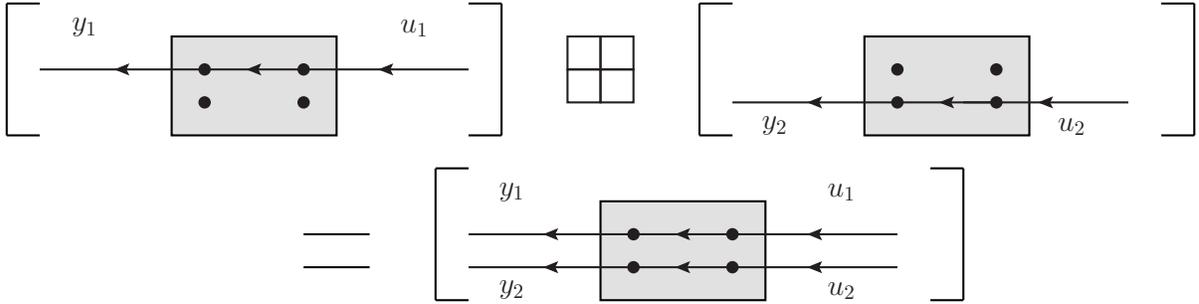}
\caption{The concatenation of two models $\mathbf{V}_{1}\boxplus \mathbf{V}_{2}$ with the same state space $\mathcal{X}$.}
\label{fig:lie_trotter0a}
\end{figure}

The concatenation sum of two model matrices will result in the type of
situation depicted in the picture in figure \ref{fig:lie_trotter0b},
that is, model (\ref{eq:model_matrix_decomp}) with $N_{11}=N_1, \,
N_{22}=N_2 , \, N_{12}=0=N_{21}$.

It is worth remarking that the addition rule (\ref{eq: sum}) makes sense for
stochastic flows, either in the It\={o} or Stratonovich form: here we would have stochastic differential equations
\begin{eqnarray*}
dx &=&v\left( x\right) dt+\sigma \left( x\right) dU \\
dY &=&h\left( x\right) dt+\gamma dU
\end{eqnarray*}
where $U$ is a semi-martingale with $\dot{U}=u$, $\dot{Y}=y$ formally. A
concatenation would then take the form
\begin{eqnarray*}
dx &=&\left[ v_{1}\left( x\right) +v_{2}\left( x\right) \right] dt+\sigma
_{1}\left( x\right) dU_{1}+\sigma _{2}\left( x\right) dU_{2}, \\
dY_{1} &=&h_{1}\left( x\right) dt+\gamma _{1}dU_{1}, \\
dY_{2} &=&h_{2}\left( x\right) dt+\gamma _{2}dU_{2}.
\end{eqnarray*}

\subsection{Series Product}

Following this, (assuming the dimensions match) we may then introduce
feedback into the concatenated model (\ref{eq:concat_sum}) by setting the
output $y_{1}(\cdot )$ of the first system equal to the input $u_{2}(\cdot )$
of the second. Setting $u_{2}=y_{1}(=L_{1}x+N_{1}u_{1})$ and eliminating
these as internal signals in the concatenated system above, we reduce to a
linear system
\begin{equation*}
\left\{
\begin{array}{l}
\dot{x}=\left( K_{1}+K_{2}+M_{2}L_{1}\right) x+\left(
M_{1}+M_{2}N_{1}\right) u_{1}, \\
y_{2}=\left( L_{2}+N_{2}L_{1}\right) x+N_{2}N_{1}u_{1},
\end{array}
\right.
\end{equation*}
with model matrix
\begin{equation}
\mathbf{V}_{2}\ast \mathbf{V}_{1}\triangleq \left[
\begin{array}{cc}
K_{1}+M_{2}L_{1}+K_{2} & M_{1}+M_{2}N_{1} \\
L_{2}+N_{2}L_{1} & N_{2}N_{1}
\end{array}
\right] .  \label{eq:series_model}
\end{equation}

\begin{figure}[tbph]
\centering
\includegraphics[width=0.60\textwidth]{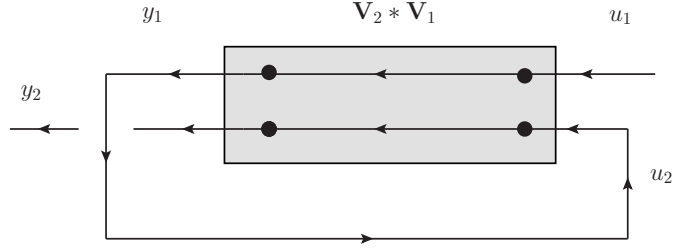}
\caption{We sketch a concatenation of two models where the output $y_{1}$ is
fed back in as input $u_{2}$ to the same system: resulting in a reduced
model $\mathbf{V}_{2}\ast \mathbf{V}_{1}$.}
\label{fig:lie_trotter1}
\end{figure}

We refer to the binary operation $\ast $ as the (general) \textit{series
product}, and this will recur in this paper under various guises.

\subsection{The Heisenberg Group}

The collection of square model matrices of a fixed dimension, and with lower
block $N$ invertible, forms a group with the series product as law. A
straightforward representation $\rho $ of these groups as a subgroup of
higher dimensional upper block-triangular matrices (with the series product
law now replaced by ordinary matrix multiplication) is given by
\begin{equation*}
\rho :\left[
\begin{array}{cc}
K & M \\
L & N
\end{array}
\right] \mapsto \left[
\begin{array}{ccc}
I & M & K \\
0 & N & L \\
0 & 0 & I
\end{array}
\right] .
\end{equation*}

We now make the observation that we have obtained (in the case $N=I$) the
Heisenberg group associated with the canonical commutation relations: we
refer to the situation $N\neq I$ as the extended Heisenberg group. For a
single-input, single-output, single variable system, we see that the Lie
group is generated by
\begin{equation*}
\mathfrak{a}=\left[
\begin{array}{ccc}
0 & 1 & 0 \\
0 & 0 & 0 \\
0 & 0 & 0
\end{array}
\right] ,\mathfrak{a}^{\dag }=\left[
\begin{array}{ccc}
0 & 0 & 0 \\
0 & 0 & 1 \\
0 & 0 & 0
\end{array}
\right] ,\mathfrak{n}=\left[
\begin{array}{ccc}
0 & 0 & 0 \\
0 & 1 & 0 \\
0 & 0 & 0
\end{array}
\right] ,\mathfrak{t}=\left[
\begin{array}{ccc}
0 & 0 & 1 \\
0 & 0 & 0 \\
0 & 0 & 0
\end{array}
\right] ,
\end{equation*}
and we note the product table
\begin{equation*}
\begin{tabular}{l|llll}
$\times $ & $\mathfrak{a}$ & $\mathfrak{n}$ & $\mathfrak{a}^{\dag }$ & $%
\mathfrak{t}$ \\ \hline
$\mathfrak{a}$ & 0 & $\mathfrak{a}$ & $\mathfrak{t}$ & 0 \\
$\mathfrak{n}$ & 0 & $\mathfrak{n}$ & $\mathfrak{a}^{\dag }$ & 0 \\
$\mathfrak{a}^{\dag }$ & 0 & 0 & 0 & 0 \\
$\mathfrak{t}$ & 0 & 0 & 0 & 0
\end{tabular}
\end{equation*}
so that the non-zero Lie brackets are $\left[ \mathfrak{a},\mathfrak{a}%
^{\dag }\right] =\mathfrak{t}$, $\left[ \mathfrak{a},\mathfrak{n}\right] =%
\mathfrak{a}$ and $\left[ \mathfrak{n},\mathfrak{a}^{\dag }\right] =%
\mathfrak{a}^{\dag }$.

\subsection{Cascading}

We should explain that the terminology of ``series'' is meant to driving
fields acting on a given system in series and the use of the single state
variable $x$ allows for the possibility of variable sharing. The situation
where two separate systems connected in series will be termed ``cascading''
and we should emphasize that this is indeed as a special case. Here the
joint state $x=\left[
\begin{array}{c}
x_{1} \\
x_{2}
\end{array}
\right] $ is the direct sum of the states $x_{1}$ and $x_{2}$ of the first
and second system respectively, and the cascaded system is then
\begin{eqnarray*}
\left[
\begin{array}{cc}
\left[
\begin{array}{cc}
0 & 0 \\
0 & K_{2}
\end{array}
\right] & \left[
\begin{array}{c}
0 \\
M_{2}
\end{array}
\right] \\
\left[ 0,L_{2}\right] & N_{2}
\end{array}
\right] \ast \left[
\begin{array}{cc}
\left[
\begin{array}{cc}
K_{1} & 0 \\
0 & 0
\end{array}
\right] & \left[
\begin{array}{c}
M_{1} \\
0
\end{array}
\right] \\
\left[ L_{1},0\right] & N_{1}
\end{array}
\right] \\
=\left[
\begin{array}{cc}
\left[
\begin{array}{cc}
K_{1} & 0 \\
M_{2}L_{1} & K_{2}
\end{array}
\right] & \left[
\begin{array}{c}
M_{1} \\
M_{2}L_{1}
\end{array}
\right] \\
\left[ N_{2}L_{1},L_{2}\right] & N_{2}N_{1}
\end{array}
\right] .
\end{eqnarray*}
which gives the correct matrix of coefficients for the systems
\begin{equation*}
\mathbf{V}_{1}\equiv \left\{
\begin{array}{l}
\dot{x}_{1}=K_{1}x_{1}+M_{1}u_{1} \\
y_{1}=L_{1}x_{1}+N_{1}u_{1}
\end{array}
\right. ,\quad \mathbf{V}_{2}\equiv \left\{
\begin{array}{l}
\dot{x}_{2}=K_{2}x_{2}+M_{2}u_{2} \\
y_{2}=L_{2}x_{2}+N_{2}u_{2}
\end{array}
\right. ,
\end{equation*}
under the identification $u_{2}=y_{1}$.

\begin{figure}[htbp]
\centering
\includegraphics[width=0.60\textwidth]{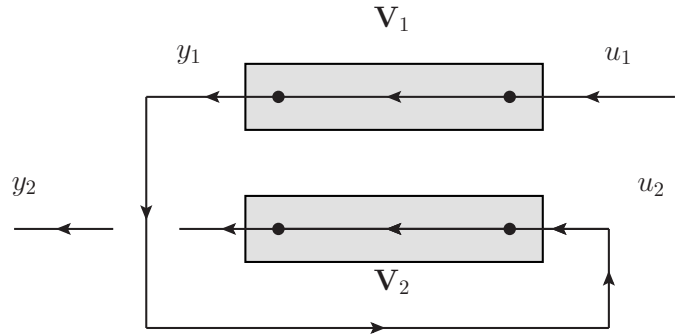}
\caption{Cascaded systems: a special case of the series product where the
inputs $u_1$ and $u_2$ act on separate state variables, that is, distinct
systems.}
\label{fig:lie_trotter_cascade}
\end{figure}

\section{Quantum Stochastic Models}

\subsection{Second Quantization}

We recall the basic ideas of the (Bosonic) second quantization over a
separable Hilbert space $\mathfrak{K}$. The Fock space over $\mathfrak{K}$
is $\Gamma \left( \mathfrak{K}\right) =\oplus _{n=0}^{\infty }\left( \otimes
_{\text{symm.}}^{n}\mathfrak{K}\right) $, and a total set of vectors is
provided by the exponential vectors defined, for test vector $f\in %
\mathfrak{K}$, by
\begin{equation*}
\varepsilon \left( f\right) =1\oplus f\oplus \left( \frac{1}{\sqrt{2!}}%
f\otimes f\right) \oplus \left( \frac{1}{\sqrt{3!}}f\otimes f\otimes
f\right) \oplus \cdots .
\end{equation*}
The creation and annihilation operators with test vector $\phi $ are denoted
as $a^{\dag }\left( \phi \right) $ and $a\left( \phi \right) $ respectively,
and, along with the differential second quantization $d\Gamma \left(
X\right) $ of a self-adjoint operator $X$, they can be defined by
\begin{gather*}
a\left( \phi \right) \varepsilon \left( f\right) =\langle \phi |f\rangle
\varepsilon \left( f\right) ,\quad a^{\dag }\left( \phi \right) \varepsilon
\left( f\right) =\left. \frac{d}{du}\varepsilon \left( f+u\phi \right)
\right| _{u=0}, \\
d\Gamma \left( X\right) \varepsilon \left( f\right) =\left. \frac{1}{i}\frac{%
d}{du}\varepsilon \left( e^{iuX}f\right) \right| _{u=0}.
\end{gather*}
The closures of these operators then satisfy the canonical commutation
relations (CCR) $\left[ a\left( f\right) ,a^{\dag }\left( g\right) \right]
=\langle f|g\rangle $.

\begin{definition}
Let $\mathfrak{K}$\ be a fixed separable Hilbert space. We denote by $U(%
\mathfrak{K}) $ the group of unitary operators on $\mathfrak{K}$ with the
strong operator topology. The Euclidean group EU$\left( \mathfrak{K}\right) $
over $\mathfrak{K}$ is the semi-direct product of $U\left( \mathfrak{K}%
\right) $ with the translation group on $\mathfrak{K}$ and consists of pairs
$\left( T,\phi \right) $ where $T\in U(\mathfrak{K})$ and $\phi \in %
\mathfrak{K}$. The group law is $\left( T_{2},\phi _{2}\right) \circ \left(
T_{1},\phi _{1}\right) =\left( T_{2}T_{1},\phi _{2}+T_{2}\phi _{1}\right) $.
The extended Heisenberg group over $\mathfrak{K}$ is defined to be
\begin{equation*}
\mathfrak{G}\left( \mathfrak{K}\right) =EU\left( \mathfrak{K}\right) \times
\mathbb{R}
\end{equation*}
whose basic elements are triples $\left( T,\phi ,\theta \right) $ with the
group law given by
\begin{equation}
\left( T_{2},\phi _{2},\theta _{2}\right) \vartriangleleft \left( T_{1},\phi
_{1},\theta _{1}\right) =\left( T_{2}T_{1},\phi _{2}+T_{2}\phi _{1},\theta
_{1}+\theta _{2}+\text{Im}\langle \phi _{2}|T_{2}\phi _{1}\rangle \right) .
\label{eq: Heisenberg group law}
\end{equation}
\end{definition}

For $\left( T,\phi \right) \in EU\left( \mathfrak{K}\right) $ we obtain the
Weyl unitary $W\left( T,\phi \right) $ on $\Gamma \left( \mathfrak{K}\right)
$ defined on the domain of exponential vectors by
\begin{equation*}
W\left( T,\phi \right) \,\varepsilon \left( f\right) =\exp \left\{ -\frac{1}{%
2}\left\| \phi \right\| ^{2}-\langle \phi |Tf\rangle \right\} \;\varepsilon
\left( Tf+\phi \right) .
\end{equation*}
The special cases of a pure rotation $\Gamma \left( T\right) =W\left(
T,0\right) $, with $\Gamma \left( e^{iX}\right) =e^{id\Gamma \left( X\right)
}$, and a pure translation $W\left( \phi \right) =W\left( I,\phi \right)
\equiv \exp \left\{ a^{\dag }\left( \phi \right) -a\left( \phi \right)
\right\} $ lead to the second quantization and the Weyl displacement
unitaries respectively. The map $W:$EU$\left( \mathfrak{K}\right) \mapsto
U\left( \Gamma \left( \mathfrak{K}\right) \right) $ however yields only a
projective unitary representation of the Euclidean group since we have
\begin{equation*}
W\left( T_{2},\phi _{2}\right) W\left( T_{1},\phi _{1}\right) =\exp \left\{
-i\text{Im}\langle \phi _{2}|T_{2}\phi _{1}\rangle \right\} \;W\left( \left(
T_{2},\phi _{2}\right) \circ \left( T_{1},\phi _{1}\right) \right) ,
\end{equation*}
which is the Weyl form of the CCR and the presence of the multiplier is
equivalent to the original CCR.

\begin{proposition}
A unitary representation of $\mathfrak{G}\left( \mathfrak{K}\right) $ in
terms of unitaries on the Bose Fock space $\Gamma \left( \mathfrak{K}\right)
$ is then given by the modified Weyl operators $W\left( T,\phi ,\theta
\right) $ with action
\begin{equation*}
W\left( T,\phi ,\theta \right) \,\varepsilon \left( f\right) =e^{-i\theta
}W\left( T,\phi \right) \,\varepsilon \left( f\right) .
\end{equation*}
\end{proposition}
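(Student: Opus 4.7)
The plan is to reduce the statement to two facts already at hand: the operator $W(T,\phi)$ is unitary on $\Gamma(\mathfrak{K})$, and the Weyl-form CCR displayed just before the proposition tells us exactly by how much $W$ fails to be a representation of $EU(\mathfrak{K})$. Multiplying by the scalar $e^{-i\theta}$ will absorb this multiplier precisely when the third slot is added with the cocycle $\theta_1+\theta_2+\mathrm{Im}\langle \phi_2|T_2\phi_1\rangle$, which is exactly the law $\vartriangleleft$ defining $\mathfrak{G}(\mathfrak{K})$.

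Concretely, I would first note that since $|e^{-i\theta}|=1$ and $W(T,\phi)$ is unitary, the formula $W(T,\phi,\theta)\varepsilon(f)=e^{-i\theta}W(T,\phi)\varepsilon(f)$ defines a bounded operator on the dense span of exponential vectors that extends to a unitary on all of $\Gamma(\mathfrak{K})$; boundedness on the total set of exponentials together with preservation of inner products (inherited from $W(T,\phi)$) is standard. Strong continuity in $(T,\phi,\theta)$ is inherited in the same way from strong continuity of $W(T,\phi)$ and continuity of the scalar factor.

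Next I would verify the homomorphism property on exponential vectors. Computing
\begin{equation*}
W(T_2,\phi_2,\theta_2)\,W(T_1,\phi_1,\theta_1)\,\varepsilon(f)
= e^{-i(\theta_1+\theta_2)}\,W(T_2,\phi_2)\,W(T_1,\phi_1)\,\varepsilon(f),
\end{equation*}
and invoking the Weyl CCR
\begin{equation*}
W(T_2,\phi_2)\,W(T_1,\phi_1)=\exp\{-i\,\mathrm{Im}\langle \phi_2|T_2\phi_1\rangle\}\,W\big((T_2,\phi_2)\circ(T_1,\phi_1)\big),
\end{equation*}
the product collapses to $e^{-i\theta'}\,W(T_2T_1,\phi_2+T_2\phi_1)\,\varepsilon(f)$ with $\theta'=\theta_1+\theta_2+\mathrm{Im}\langle \phi_2|T_2\phi_1\rangle$. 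Comparing with the definition of the triple $(T_2,\phi_2,\theta_2)\vartriangleleft(T_1,\phi_1,\theta_1)$ and reading off the defining formula for $W(\,\cdot\,,\cdot\,,\cdot\,)$ finishes the identity on the total set of exponentials, and hence on all of $\Gamma(\mathfrak{K})$.

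There is no real obstacle here: the content sits entirely in the Weyl CCR multiplier, and the role of the $\mathbb{R}$ factor in $\mathfrak{G}(\mathfrak{K})$ is precisely to cancel it. The only mildly non-automatic step is to make sure the calculation is genuinely carried out on $\varepsilon(f)$ rather than on a general vector (the exponentials being a core), but this is routine. Thus the proposition follows as a direct corollary of the Weyl commutation relation together with the cocycle structure built into the group law $\vartriangleleft$.
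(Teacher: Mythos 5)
Your proof is correct and follows exactly the route the paper intends: the paper offers no separate argument beyond the remark that the scalar phase absorbs the Weyl multiplier, and your calculation on exponential vectors, combining the Weyl CCR multiplier $\exp\{-i\,\mathrm{Im}\langle\phi_2|T_2\phi_1\rangle\}$ with the cocycle in the group law $\vartriangleleft$, is precisely that check spelled out. Nothing further is needed.
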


The role of the ``scalar phase'' $\theta $ here is of course to absorb the
Weyl multiplier.

\subsection{Non-abelian Weyl CCR}

We now turn to a question, first posed by Hudson and Parthasarathy in 1983
\cite{HP83}, on how to obtain a non-abelian generalization of the Weyl CCR
version wherein the role of $U\left( 1\right) $ phase is replaced by a
(sub-)group of unitaries $U\left( \mathfrak{h}\right) $ over a fixed
separable Hilbert space $\mathfrak{h}$. In the present paper we show that
the appropriate non-abelian extensions are
\begin{eqnarray*}
T\in U\left( \mathfrak{K}\right) &\rightleftharpoons &S\in U\left( %
\mathfrak{h}\otimes \mathfrak{K}\right) , \\
f\in \mathfrak{K} &\rightleftharpoons &L\in \mathfrak{B}\left( \mathfrak{h},%
\mathfrak{h}\otimes \mathfrak{K}\right) , \\
\theta \in \mathbb{R} &\rightleftharpoons &H\in \mathfrak{B}_{\text{s.a.}%
}\left( \mathfrak{h}\right) ,
\end{eqnarray*}
where $\mathfrak{B}_{\text{s.a.}}\left( \mathfrak{h}\right) $ is the set of
bounded self-adjoint operators on $\mathfrak{h}$. The corresponding law
replacing (\ref{eq: Heisenberg group law}) is the series product:

\begin{definition}
Let $\mathfrak{h}$ and $\mathfrak{K}$\ be a fixed separable Hilbert spaces.
The extended Heisenberg group $\mathfrak{G}\left( \mathfrak{h},\mathfrak{K}%
\right) $ is defined to be the set of triples $\left( S,L,H\right) \in
U\left( \mathfrak{h}\otimes \mathfrak{K}\right) \times \mathfrak{B}\left( %
\mathfrak{h},\mathfrak{h}\otimes \mathfrak{K}\right) \times \mathfrak{B}_{%
\text{s.a.}}\left( \mathfrak{h}\right) $, with group law given by the
(special) series product
\begin{equation}
\left( S_{2},L_{2},H_{2}\right) \vartriangleleft \left(
S_{1},L_{1},H_{1}\right) =\left( S_{2}S_{1},L_{2}+S_{2}L_{1},H_{1}+H_{2}+%
\mathrm{Im}L_{2}^{\dag }S_{2}L_{1}\right) .
\label{eq: series product unitary}
\end{equation}
\end{definition}

Unlike the general situation in quantum groups, the product $%
\vartriangleleft $\ does in fact lead to a group law! It originated in the
work of one of the authors in relation to a systems theoretic approach to\
``cascaded'' quantum stochastic models \cite{QFN1},\cite{GJ Series}.

The original answer provided by Hudson and Parthasarathy involved the
quantum It\={o} calculus with initial space $\mathfrak{h}$ and multiplicity
space $\mathfrak{K}$, see below, in which a triple $\left( S,L,H\right) $
encoded the information on the coefficients of a quantum stochastic
evolution. Apart from a restriction to quantum It\={o} diffusions $\left(
S=I\right) $, they also considered only the operator product of the unitary
quantum evolutions which forced the introduction of time dependence -
effectively the coefficients $\left( S_{1},L_{1},H_{1}\right) $ will be
evolved by the unitary process generated by the second set $\left(
S_{2},L_{2},H_{2}\right) $. The $S\neq I$ case is readily handled with the
aid of quantum stochastic calculus employing the gauge process.

We shall show that\ the natural Lie-Trotter product formula for a pair of
quantum stochastic evolutions leads naturally to the series product (\ref
{eq: series product unitary}), which from the above is the generalization of
the Weyl canonical commutations relations to the non-abelian setting.

\subsection{Quantum Stochastic Evolutions}

We recall the quantum stochastic calculus of Hudson and Parthasarathy \cite
{HP}. The Hilbert space for the system and noise is $\mathfrak{H}=%
\mathfrak{h}\otimes \Gamma \left( L_{\mathfrak{K}}^{2}[0,\infty )\right) $
where $\mathfrak{h}$ is a fixed separable Hilbert space called the initial
space (modelling a quantum mechanical system) and we have the Fock space
over the space of square-integrable $\mathfrak{K}$-valued functions on $%
[0,\infty )$. Note that $L_{\mathfrak{K}}^{2}[0,\infty )\cong \mathfrak{K}%
\otimes L^{2}[0,\infty )$. For transparency of presentation, we restrict to
the case where $\mathfrak{K}$ is $\mathbb{C}^{n}$, however the general case
of a separable Hilbert space presents no difficulties. Let $\left\{
e_{j}\right\} _{j=1}^{n}$ be a basis of $\mathfrak{K}$ (the multiplicity
space) and define the operators
\begin{eqnarray*}
\Lambda ^{00}\left( t\right) &\triangleq &t, \\
\Lambda ^{i0}\left( t\right) &=&A_{i}^{\dag }\left( t\right) \triangleq
a^{\dag }\left( |e_{i}\rangle \otimes 1_{\left[ 0,t\right] }\right) , \\
\Lambda ^{0j}\left( t\right) &=&A_{j}\left( t\right) \triangleq a\left(
|e_{j}\rangle \otimes 1_{\left[ 0,t\right] }\right) , \\
\Lambda ^{ij}\left( t\right) &\triangleq &d\Gamma \left( |e_{i}\rangle
\langle e_{j}|\otimes \chi _{\left[ 0,t\right] }\right) ,
\end{eqnarray*}
where $1_{\left[ 0,t\right] }$ is the characteristic function of the
interval $\left[ 0,t\right] $ and $\chi _{\left[ 0,t\right] }$ is the
operator on $L^{2}[0,\infty )$ corresponding to multiplication by $1_{\left[
0,t\right] }$. Hudson and Parthasarathy developed a quantum It\={o} calculus
where integrals of adapted processes with respect to the fundamental
processes $\Lambda ^{\alpha \beta }$. The It\={o} table is then
\begin{equation*}
d\Lambda ^{\alpha \beta }\left( t\right) \,d\Lambda ^{\mu \nu }\left(
t\right) =\hat{\delta}_{\beta \mu }\,d\Lambda ^{\alpha \nu }\left( t\right)
\end{equation*}
where $\hat{\delta}_{\alpha \beta }$ is the Evans-Hudson delta defined to be
unity if $\alpha =\beta \in \left\{ 1,\cdots ,n\right\} $ and zero
otherwise. This may be written as
\begin{equation*}
\begin{tabular}{l|llll}
$\times $ & $dA_{k}$ & $d\Lambda _{kl}$ & $dA_{l}^{\dag }$ & $dt$ \\ \hline
$dA_{i}$ & 0 & $\delta _{ik}dA_{l}$ & $\delta _{il}dt$ & 0 \\
$d\Lambda _{ij}$ & 0 & $\delta _{jk}d\Lambda _{il}$ & $\delta _{jl}dA_{i}$ &
0 \\
$dA_{j}^{\dag }$ & 0 & 0 & 0 & 0 \\
$dt$ & 0 & 0 & 0 & 0
\end{tabular}
\end{equation*}
In particular, we have the following theorem \cite{HP}.

\begin{theorem}
There exists a unique solution $V\left( \cdot ,\cdot \right) $ to the
quantum stochastic integro-differential equation
\begin{equation}
V\left( t,s\right) =I+\int_{s}^{t}dG\left( \tau \right) \,V\left( \tau
,s\right)  \label{eq diff eq}
\end{equation}
$\left( t\geq s\geq 0\right) $ where
\begin{equation*}
dG\left( t\right) =G_{\alpha \beta }\otimes d\Lambda ^{\alpha \beta }\left(
t\right)
\end{equation*}
with $G_{\alpha \beta }\in \mathfrak{B}\left( \mathfrak{h}\right) $. (We
adopt the convention that we sum repeated Greek indices over the range $%
0,1,\cdots ,n$.)
\end{theorem}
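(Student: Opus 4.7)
The plan is to construct $V(t,s)$ by Picard iteration on the dense domain of exponential vectors, using the first fundamental estimates of Hudson--Parthasarathy quantum stochastic calculus to control convergence.

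First, I fix $s \geq 0$ and an exponential vector $\xi \otimes \varepsilon(f)$, with $\xi \in \mathfrak{h}$ and $f \in L^2_{\mathfrak{K}}[0,\infty)$ taken initially to be locally bounded. I set $V_0(t,s) = I$ and define recursively
\[ V_{n+1}(t,s) = I + \int_s^t dG(\tau)\, V_n(\tau,s) = I + \int_s^t \bigl( G_{\alpha\beta} \otimes d\Lambda^{\alpha\beta}(\tau) \bigr) V_n(\tau,s). \]
By induction each $V_n(\cdot,s)$ is a well-defined adapted process, since every $G_{\alpha\beta}$ is bounded on $\mathfrak{h}$ and adaptedness is preserved by the quantum stochastic integral.

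The technical heart of the argument is to control the successive differences $\Delta_n(t) := V_{n+1}(t,s) - V_n(t,s)$. Applying the first fundamental estimate to the quantum stochastic integral defining $\Delta_n$ yields an inequality of the schematic form
\[ \|\Delta_n(t)\,\xi \otimes \varepsilon(f)\|^2 \leq C(f,t) \int_s^t \|\Delta_{n-1}(\tau)\,\xi \otimes \varepsilon(f)\|^2 \, d\tau, \]
where $C(f,t)$ depends on $\max_{\alpha,\beta}\|G_{\alpha\beta}\|$ and on integrals of $\|f\|$ and $\|f\|^2$ up to time $t$. Iterating this inequality $n$ times produces a factorial bound of order $C^n(t-s)^n/n!$, so the telescoping series $V_0 + \sum_n \Delta_n$ converges in norm on this vector to an adapted process $V(t,s)$; continuity of the quantum stochastic integral in its integrands then permits passage to the limit in the recursion, producing a solution of (\ref{eq diff eq}). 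Uniqueness follows by the same estimate applied to the difference $W$ of two putative solutions, which satisfies the homogeneous equation $W(t,s) = \int_s^t dG(\tau)\,W(\tau,s)$; iterating $n$ times bounds $\|W(t,s)\,\xi \otimes \varepsilon(f)\|^2$ by $C^n(t-s)^n/n!$ times a constant independent of $n$, forcing $W = 0$ on a total set of vectors.

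The main obstacle, and indeed the principal technical content of the Hudson--Parthasarathy theorem, is the bound for the gauge terms $d\Lambda^{ij}$, which are \emph{unbounded} integrators on Fock space: the $\|f(\tau)\|$-weights arising from these contributions must be propagated carefully through the iteration so that the $1/n!$ decay survives. This is the reason for restricting to locally bounded test functions in the construction, with the extension to arbitrary $f \in L^2_{\mathfrak{K}}[0,\infty)$ obtained afterwards by density together with the uniform-in-$f$ character of the estimates on bounded subsets.
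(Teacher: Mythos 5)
Your argument is correct in outline: the paper itself does not prove this theorem but quotes it from Hudson--Parthasarathy \cite{HP}, and your Picard iteration on the exponential domain, controlled by the first fundamental estimate so that the successive differences acquire a $C^{n}(t-s)^{n}/n!$ bound, with uniqueness by the same Gronwall-type iteration applied to the difference of two solutions, is precisely the classical argument of that reference. The only points worth making explicit are that uniqueness must be asserted within the class of adapted processes that are locally bounded (or square-integrable) on exponential vectors, so that the constant you iterate against is finite, and that the restriction to locally bounded test functions is a convenience of the original treatment rather than a necessity, since the weight $\int_s^t(1+\|f(\tau)\|^2)\,d\tau$ is already finite for every $f\in L^2_{\mathfrak{K}}[0,\infty)$.
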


We refer to $\mathbf{G=}\left[ G_{\alpha \beta }\right] \in \mathfrak{B}%
\left( \mathfrak{h}\otimes \left( \mathbb{C}\oplus \mathfrak{K}\right)
\right) $, as the \textit{coefficient matrix}, and $V$ as the left process
generated by $\mathbf{G}$. With respect to the decomposition $\mathfrak{h}%
\otimes \left( \mathbb{C}\oplus \mathfrak{K}\right) =\mathfrak{h}\oplus
\left( \mathfrak{h}\otimes \mathfrak{K}\right) $ we may write
\begin{equation*}
\mathbf{G}=\left[
\begin{array}{cc}
K & M \\
L & N-I
\end{array}
\right]
\end{equation*}
where $K\in \mathfrak{B}\left( \mathfrak{h}\right) ,L\in \mathfrak{B}\left( %
\mathfrak{h},\mathfrak{h}\otimes \mathfrak{K}\right) ,M\in \mathfrak{B}%
\left( \mathfrak{h}\otimes \mathfrak{K},\mathfrak{h}\right) $ and $N\in %
\mathfrak{B}\left( \mathfrak{h}\otimes \mathfrak{K}\right) $. In the
situation where $\mathfrak{K}$ is $\mathbb{C}^{n}$ we have $G_{00}=K$, $L$
is the column vector $\left[ G_{i0}\right] $, $M$ is the row vector $\left[
G_{0j}\right] $ and $N_{ij}=G_{ij}$.

Adopting the convention that repeated Latin indices are summed over the
range $1,\cdots ,n$, we may write in more familiar notation \cite{HP}
\begin{equation*}
dG\left( t\right) =K\otimes dt+M_{i}\otimes dA_{i}\left( t\right)
+L_{j}\otimes dA_{j}^{\dag }\left( t\right) +(N_{ij}-\delta _{ij})\otimes
d\Lambda _{ij}\left( t\right) .
\end{equation*}
For emphasis, we shall often write $V_{\mathbf{G}}\left( \cdot ,\cdot
\right) $ when we wish to emphasize the dependence on the coefficients $%
\mathbf{G}$. We remark that the process satisfies the following properties:

\begin{enumerate}
\item  Flow Law: $V_{\mathbf{G}}\left( t,r\right) V_{\mathbf{G}}\left(
r,s\right) =V_{\mathbf{G}}\left( t,s\right) $ whenever $t\geq r\geq s$.

\item  Stationarity: $\Gamma \left( \theta _{\tau }\right) V_{\mathbf{G}%
}\left( t,s\right) \Gamma \left( \theta _{\tau }\right) =V_{\mathbf{G}%
}\left( t+\tau ,s+\tau \right) $ where $\theta _{\tau }$ is the shift map on
$L_{\mathfrak{K}}^{2}[0,\infty )$.

\item  Localization: with respect to the decomposition $\mathfrak{h}\otimes
\Gamma \left( L_{\mathfrak{K}}^{2}[0,\infty )\right) \cong \mathfrak{h}%
\otimes \Gamma (L_{\mathfrak{K}}^{2}[0,s)\otimes \Gamma \left( L_{%
\mathfrak{K}}^{2}[s,t)\right) \otimes \Gamma \left( L_{\mathfrak{K}%
}^{2}[t,\infty )\right) $, $V_{\mathbf{G}}\left( t,s\right) $ acts trivially
on the factors $\Gamma (L_{\mathfrak{K}}^{2}[0,s)$ and $\Gamma \left( L_{%
\mathfrak{K}}^{2}[t,\infty )\right) $.
\end{enumerate}

It is convenient to introduce the projection matrix (the Hudson-Evans delta)
\begin{equation*}
\mathbf{\hat{\delta}=}\left[
\begin{array}{cc}
0 & 0 \\
0 & I
\end{array}
\right] \equiv \left[ \hat{\delta}_{\alpha \beta }\right] .
\end{equation*}
The key result from \cite{HP} is the following concerning unitary evolutions.

\begin{theorem}
\label{thm: G unitary} Necessary and sufficient conditions on $\mathbf{G}$
to generate a unitary family are that it satisfies the identities
\begin{equation*}
\mathbf{G}+\mathbf{G}^{\dag }+\mathbf{G}^{\dag }\mathbf{\hat{\delta}G}=0%
\text{ \ \ (isometry), }\mathbf{G}+\mathbf{G}^{\dag }+\mathbf{G\hat{\delta}G}%
^{\dag }=0\text{ \ \ (co-isometry),}
\end{equation*}
and this is equivalent to $\mathbf{G}$ taking the form
\begin{equation}
\mathbf{G}_{(S,L,H)}=\left[
\begin{array}{cc}
-\frac{1}{2}L^{\dag }L-iH & -L^{\dag }S \\
L & S-I
\end{array}
\right]  \label{eq:G_unitary}
\end{equation}
with $S$ is a unitary and $H$ is self-adjoint. We then refer to the triple $%
\left( S,L,H\right) $ as Hudson-Parthasarathy coefficients.
\end{theorem}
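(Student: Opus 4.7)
The plan is to apply the quantum It\^{o} formula to $V^{\dag }V$ and $VV^{\dag }$, read off the coefficient matrices of the resulting differentials, and then unpack the algebraic identities block-by-block.

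\textbf{Step 1: Quantum It\^{o} calculation.} Writing $dV = G_{\alpha\beta}\,d\Lambda ^{\alpha\beta}\,V$, its adjoint $dV^{\dag } = V^{\dag }(\mathbf{G}^{\dag })_{\alpha\beta}\,d\Lambda ^{\alpha\beta}$, and using the It\^{o} table $d\Lambda ^{\alpha\beta}\,d\Lambda ^{\mu\nu} = \hat{\delta }_{\beta\mu}\,d\Lambda ^{\alpha\nu}$, the product rule gives
\[
d(V^{\dag }V) = V^{\dag }\bigl(\mathbf{G}+\mathbf{G}^{\dag }+\mathbf{G}^{\dag }\hat{\delta }\mathbf{G}\bigr)_{\alpha\beta}V\otimes d\Lambda ^{\alpha\beta},
\]
and symmetrically $d(VV^{\dag }) = V\bigl(\mathbf{G}+\mathbf{G}^{\dag }+\mathbf{G}\hat{\delta }\mathbf{G}^{\dag }\bigr)_{\alpha\beta}V^{\dag }\otimes d\Lambda ^{\alpha\beta}$. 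For necessity, unitarity of $V$ forces both differentials to vanish, and uniqueness of the quantum It\^{o} decomposition forces the bracketed matrices to vanish. For sufficiency, assuming both identities, $V^{\dag }V-I$ and $VV^{\dag }-I$ both satisfy the trivial QSDE $dX=0$ with null initial data, so the uniqueness part of the preceding existence theorem yields $V^{\dag }V = VV^{\dag } = I$.

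\textbf{Step 2: Unpacking the algebraic identities.} With $\mathbf{G} = \left[\begin{smallmatrix} K & M \\ L & N-I \end{smallmatrix}\right]$ and $\hat{\delta } = \left[\begin{smallmatrix} 0 & 0 \\ 0 & I \end{smallmatrix}\right]$, block-multiplying shows that the isometry identity is equivalent to the three relations
\[
K+K^{\dag }+L^{\dag }L=0,\qquad M+L^{\dag }N=0,\qquad N^{\dag }N=I,
\]
while the co-isometry identity additionally delivers $NN^{\dag }=I$. Combining these, $N$ is a unitary, which we rename $S$; the off-diagonal relation fixes $M=-L^{\dag }S$; and the first relation forces $\operatorname{Re}K=-\tfrac{1}{2}L^{\dag }L$, leaving the imaginary part free, so $K=-\tfrac{1}{2}L^{\dag }L-iH$ for a unique self-adjoint $H$. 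This is exactly (\ref{eq:G_unitary}); conversely, substituting this form back verifies both identities by direct computation.

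\textbf{Main obstacle.} The algebra of Step 2 is routine bookkeeping; the real difficulty lies in Step 1, because $V$ is generally unbounded and so both the quantum It\^{o} product rule and the ``vanishing coefficients'' deduction require justification. The standard manoeuvre is to test against the exponential domain: one computes the matrix element $\langle u\otimes \varepsilon (f)\mid (V^{\dag }V-I)\,v\otimes \varepsilon (g)\rangle$, where the stochastic integrals collapse to ordinary Lebesgue integrals against the test vectors $f,g$, and then applies a Gronwall-type estimate to show that this matrix element vanishes. Totality of the exponential vectors propagates the identity to the whole Hilbert space, completing both directions of the argument.
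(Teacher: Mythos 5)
The paper offers no proof of this theorem at all: it is quoted as the key result of Hudson and Parthasarathy \cite{HP}. So your proposal can only be measured against the standard argument, which it largely reconstructs. Your Step 2 is correct: block multiplication of the isometry identity gives $K+K^{\dag}+L^{\dag}L=0$, $M+L^{\dag}N=0$ and $N^{\dag}N=I$, the co-isometry identity adds $NN^{\dag}=I$ (its remaining blocks are then redundant), and this yields exactly the parametrization $N=S$ unitary, $M=-L^{\dag}S$, $K=-\frac{1}{2}L^{\dag}L-iH$ with $H=H^{\dag}$, the converse being direct substitution.

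The genuine gap is in Step 1, on the co-isometry side. For the left process $dV=dG\,V$ one has $dV^{\dag}=V^{\dag}\,dG^{\dag}$, and since the future-pointing increments commute with adapted operators and with the system coefficients, your formula $d(V^{\dag}V)=V^{\dag}\bigl(\mathbf{G}+\mathbf{G}^{\dag}+\mathbf{G}^{\dag}\hat{\delta}\mathbf{G}\bigr)_{\alpha\beta}V\otimes d\Lambda^{\alpha\beta}$ is correct. But the coefficients $G_{\alpha\beta}$ do not commute with $V$, so the sandwich form you wrote for $d(VV^{\dag})$ is wrong: the It\^{o} product rule actually gives $d(VV^{\dag})=\bigl(G_{\alpha\beta}X+X(G^{\dag})_{\alpha\beta}+\sum_{m}G_{\alpha m}X(G^{\dag})_{m\beta}\bigr)\otimes d\Lambda^{\alpha\beta}$ with $X=VV^{\dag}$ sitting \emph{between} the coefficients. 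Consequently, under the co-isometry identity it is not true that $VV^{\dag}-I$ satisfies $dX=0$; what is true is that $X\equiv I$ solves this linear Evans--Hudson-flow-type QSDE precisely when $\mathbf{G}+\mathbf{G}^{\dag}+\mathbf{G}\hat{\delta}\mathbf{G}^{\dag}=0$, and you must then invoke uniqueness of solutions for \emph{that} equation --- not the uniqueness clause of the existence theorem for $V=I+\int dG\,V$, which is an equation of a different shape --- to conclude $VV^{\dag}=I$. (The necessity direction is unaffected, since there you may substitute $VV^{\dag}=I$ before reading off coefficients and appeal to uniqueness of the integral representation, as you do.) This asymmetry is exactly where the Hudson--Parthasarathy proof has to work harder for co-isometry than for isometry; your closing remarks about matrix elements on exponential vectors and Gronwall estimates are the right technical tools, but they need to be applied to the flow equation satisfied by $VV^{\dag}$, not to a differential that has been prematurely declared zero.
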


We shall refer to a coefficient matrix as being a \textit{unitary It\={o}
generator matrix} if it leads to a unitary process. We may likewise consider
right processes, defined as the solution to $U\left( t,s\right)
=I+\int_{s}^{t}U\left( \tau ,s\right) \,dG\left( \tau \right) $, and denote
these as $U_{\mathbf{G}}$. We find that $U_{\mathbf{G}^{\dag }}=V_{\mathbf{G}%
}^{\dag }$. It turns out that it is technically easier to establish
existence of right processes, especially when the $G_{\alpha \beta }$ are
unbounded.

\subsection{The General Series Product}

\begin{definition}
The (general) series product of two coefficient matrices is defined to be
\begin{equation}
\mathbf{G}_{2}\vartriangleleft \mathbf{G}_{1}\triangleq \mathbf{G}_{1}+%
\mathbf{G}_{2}+\mathbf{G}_{2}\mathbf{\hat{\delta}G}_{1}.
\label{eq:series_prod}
\end{equation}
With respect to the standard decomposition above, this corresponds to
\begin{equation}
\left[
\begin{array}{cc}
K_{2} & M_{2} \\
L_{2} & N_{2}-I
\end{array}
\right] \vartriangleleft \left[
\begin{array}{cc}
K_{1} & M_{1} \\
L_{1} & N_{1}-I
\end{array}
\right] =\left[
\begin{array}{cc}
K_{1}+K_{2}+M_{2}L_{1} & M_{1}+M_{2}N_{1} \\
L_{2}+N_{2}L_{1} & N_{2}N_{1}-I
\end{array}
\right] .  \label{eq: series product general}
\end{equation}
\end{definition}

The series product is not commutative, however it is readily seen to be
associative. Let define the \textit{model matrix} $\mathbf{V}$ associated to
a coefficient matrix $\mathbf{G}$ to be
\begin{equation*}
\mathbf{V}\triangleq \mathbf{\hat{\delta}+G}=\left[
\begin{array}{cc}
K & M \\
L & N
\end{array}
\right] .
\end{equation*}

\begin{remark}
The series product $\mathbf{G}_{2}\vartriangleleft \mathbf{G}_{1}$ for two
coefficient matrices implies the corresponding law $\mathbf{V}_{2}\ast
\mathbf{V}_{1}$ for the associated model matrices given by
\begin{equation*}
\mathbf{V}_{2}\ast \mathbf{V}_{1}=\left[
\begin{array}{cc}
K_{1}+M_{2}L_{1}+K_{2} & M_{1}+M_{2}N_{1} \\
L_{2}+N_{2}L_{1} & N_{2}N_{1}
\end{array}
\right] .
\end{equation*}
Note that this is the natural generalization to the rule (\ref
{eq:series_model}) already seen for classical linear state based models in
series!
\end{remark}

\begin{remark}
For It\={o} generating matrices for unitary process we have
\begin{equation*}
\mathbf{G}_{(S_{2},L_{2},H_{2})}\vartriangleleft \mathbf{G}%
_{(S_{1},L_{1},H_{1})}=\mathbf{G}_{(S_{2},L_{2},H_{2})\vartriangleleft
\left( S_{1},L_{1},H_{1}\right) }
\end{equation*}
which again leads to a unitary process. Therefore the general series product
defined in (\ref{eq: series product general}) implies the special series
product (\ref{eq: series product unitary}).
\end{remark}

\begin{lemma}
The increment $dG$ associated with $V_{\mathbf{G}_{2}\vartriangleleft
\mathbf{G}_{1}}\ $is related to the increments $dG_{i}$ associated with $V_{%
\mathbf{G}_{i}}$ through the identity
\begin{equation}
dG=dG_{1}+dG_{2}+dG_{2}dG_{1}  \label{eq:series_increments}
\end{equation}
and this is equivalent to the algebraic relation (\ref{eq:series_prod}) or (%
\ref{eq: series product general}).
\end{lemma}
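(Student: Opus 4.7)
The plan is to reduce the statement to a direct calculation with the quantum Itô product table, since the increment $dG$ for any process $V_{\mathbf{G}}$ is by definition $\mathbf{G}_{\alpha\beta}\otimes d\Lambda^{\alpha\beta}$. Thus I just need to show that the Itô cross-term $dG_2\,dG_1$ reproduces the algebraic correction $\mathbf{G}_2\hat{\boldsymbol{\delta}}\mathbf{G}_1$ that appears in (\ref{eq:series_prod}); the equivalence claimed in the lemma is then a comparison of coefficient matrices, and everything else is bookkeeping.

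Concretely, I would write $dG_1=G_{1,\mu\nu}\otimes d\Lambda^{\mu\nu}$ and $dG_2=G_{2,\alpha\beta}\otimes d\Lambda^{\alpha\beta}$ (Greek indices summed over $0,1,\dots,n$) and compute
\begin{equation*}
dG_2\,dG_1 \;=\; G_{2,\alpha\beta}G_{1,\mu\nu}\otimes d\Lambda^{\alpha\beta}d\Lambda^{\mu\nu}
\;=\; G_{2,\alpha\beta}\,\hat{\delta}_{\beta\mu}\,G_{1,\mu\nu}\otimes d\Lambda^{\alpha\nu},
\end{equation*}
where the second equality uses the Hudson--Parthasarathy Itô table $d\Lambda^{\alpha\beta}d\Lambda^{\mu\nu}=\hat{\delta}_{\beta\mu}\,d\Lambda^{\alpha\nu}$. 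Recognizing $G_{2,\alpha\beta}\hat{\delta}_{\beta\mu}G_{1,\mu\nu}$ as the matrix entry $(\mathbf{G}_2\hat{\boldsymbol{\delta}}\mathbf{G}_1)_{\alpha\nu}$, this gives exactly
\begin{equation*}
dG_2\,dG_1 \;=\; (\mathbf{G}_2\hat{\boldsymbol{\delta}}\mathbf{G}_1)_{\alpha\nu}\otimes d\Lambda^{\alpha\nu}.
\end{equation*}

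Adding $dG_1$ and $dG_2$ to both sides then yields
\begin{equation*}
dG_1+dG_2+dG_2\,dG_1 \;=\; \bigl(\mathbf{G}_1+\mathbf{G}_2+\mathbf{G}_2\hat{\boldsymbol{\delta}}\mathbf{G}_1\bigr)_{\alpha\beta}\otimes d\Lambda^{\alpha\beta}.
\end{equation*}
By the definition (\ref{eq:series_prod}), the coefficient matrix on the right is $\mathbf{G}_2\vartriangleleft\mathbf{G}_1$, so the right-hand side is precisely the driving increment $dG$ of the process $V_{\mathbf{G}_2\vartriangleleft\mathbf{G}_1}$ appearing in (\ref{eq diff eq}). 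This establishes (\ref{eq:series_increments}). Conversely, since the basic differentials $\{d\Lambda^{\alpha\beta}\}$ are linearly independent (as coefficients in the quantum Itô expansion), equality of the stochastic differentials forces equality of the coefficient matrices, giving the stated equivalence of (\ref{eq:series_increments}) with (\ref{eq:series_prod})/(\ref{eq: series product general}).

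There is no real obstacle here; the only subtlety worth being explicit about is that the Itô product in $dG_2\,dG_1$ is read in the \emph{correct order} (with $dG_2$ on the left) so that the summation pattern produces $\mathbf{G}_2\hat{\boldsymbol{\delta}}\mathbf{G}_1$ rather than $\mathbf{G}_1\hat{\boldsymbol{\delta}}\mathbf{G}_2$, which is what breaks the symmetry and mirrors the non-commutativity of the series product.
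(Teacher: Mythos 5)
Your proof is correct and is exactly the ``straightforward application of the quantum It\={o} calculus'' that the paper invokes without spelling out: expanding $dG_2\,dG_1$ with the It\={o} table and matching coefficients of the linearly independent increments $d\Lambda^{\alpha\beta}$ is the intended argument. Nothing further is needed.
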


This follows from a straightforward application of the quantum It\={o}
calculus.

\bigskip

\subsection{The Group of Coefficient Matrices}

\begin{definition}
Denote by GL$_{\vartriangleleft }\left( \mathfrak{h},\mathfrak{K}\right) $
the subset of $\mathfrak{B}\left( \mathfrak{h}\otimes (\mathbb{C}\oplus %
\mathfrak{K})\right) $ consisting of operators of the form
\begin{equation*}
\mathbf{G}=\left[
\begin{array}{cc}
K & M \\
L & N-I
\end{array}
\right]
\end{equation*}
with respect to the decomposition $\mathfrak{h}\oplus (\mathfrak{h}\otimes %
\mathfrak{K})$ of $\mathfrak{h}\otimes (\mathbb{C}\oplus \mathfrak{K})$, and
where $N$ is required to be invertible. GL$_{\vartriangleleft }\left( %
\mathfrak{h},\mathfrak{K}\right) $ becomes a group under the general series
product $\vartriangleleft $ given in (\ref{eq: series product general}).
\end{definition}

We note that the zero operator is the group identity, and that the series
product inverse of $\left[
\begin{array}{cc}
K & M \\
L & N-I
\end{array}
\right] $ is $\left[
\begin{array}{cc}
-K+MN^{-1}L & -MN^{-1} \\
-N^{-1}L & N^{-1}-I
\end{array}
\right] $. The extended Heisenberg group $\mathfrak{G}\left( \mathfrak{h},%
\mathfrak{K}\right) $ is then a subgroup of GL$_{\vartriangleleft }\left( %
\mathfrak{h},\mathfrak{K}\right) $ inheriting the series product as law.

The set $\mathfrak{G}\left( \mathfrak{h},\mathfrak{K}\right) $ was
introduced in \cite{QFN1}\ as the collection of all It\={o} generator
matrices (\ref{eq:G_unitary}) and was shown to be a group under the series
product (\ref{eq: series product unitary}), though not identified as a
Heisenberg group.

\begin{remark}
\label{remark: G inverse} The isometry and co-isometry conditions in theorem
(\ref{thm: G unitary}) imply that a two-sided inverse of $\mathbf{G}\sim
\left( S,L,H\right) \in \mathfrak{G}\left( \mathfrak{h},\mathfrak{K}\right) $
for the series product is given by $\mathbf{G}^{\dag }\sim \left( S^{\dag
},-S^{\dag }L,-H\right) $. The inverse being of course unique.
\end{remark}

\begin{lemma}
The mapping :GL$_{\vartriangleleft }\left( \mathfrak{h},\mathfrak{K}\right)
\mapsto \mathfrak{B}\left( \mathfrak{h}\otimes (\mathbb{C}\oplus \mathfrak{K}%
\oplus \mathbb{C})\right) $ given by
\begin{equation*}
\mathbf{G}=\left[
\begin{array}{cc}
K & M \\
L & N-I
\end{array}
\right] \mapsto \mathbb{V}_{\mathbf{G}}=\left[
\begin{array}{ccc}
I & M & K \\
0 & N & L \\
0 & 0 & I
\end{array}
\right] .
\end{equation*}
is an injective group homomorphism.
\end{lemma}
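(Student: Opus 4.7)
The proof plan is entirely computational: everything follows from a direct block-matrix multiplication, so the only real task is to verify that the formula for $\mathbb{V}_{\mathbf{G}_2}\mathbb{V}_{\mathbf{G}_1}$ reproduces the entries of $\mathbb{V}_{\mathbf{G}_2 \vartriangleleft \mathbf{G}_1}$ prescribed by (\ref{eq: series product general}).

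First I would check that the group identity is preserved. The identity of $(GL_{\vartriangleleft}(\mathfrak{h},\mathfrak{K}),\vartriangleleft)$ is $\mathbf{G}=0$, which corresponds to $K=0$, $M=0$, $L=0$, $N=I$; substituting into the definition gives $\mathbb{V}_0 = I$, the identity on $\mathfrak{h}\otimes(\mathbb{C}\oplus\mathfrak{K}\oplus\mathbb{C})$.

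Next I would carry out the block product. Writing
\begin{equation*}
\mathbb{V}_{\mathbf{G}_2}\mathbb{V}_{\mathbf{G}_1}
=\left[\begin{array}{ccc} I & M_2 & K_2 \\ 0 & N_2 & L_2 \\ 0 & 0 & I \end{array}\right]
\left[\begin{array}{ccc} I & M_1 & K_1 \\ 0 & N_1 & L_1 \\ 0 & 0 & I \end{array}\right],
\end{equation*}
the upper-triangular structure makes the result immediate: the $(1,1),(2,2),(3,3)$ blocks are $I, N_2 N_1, I$; the $(1,2)$-block is $M_1+M_2 N_1$; the $(2,3)$-block is $L_2 + N_2 L_1$; and the $(1,3)$-block is $K_1 + M_2 L_1 + K_2$. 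Comparing with the series product formula (\ref{eq: series product general}), these are exactly the entries $K, M, L, N$ of $\mathbf{G}_2\vartriangleleft \mathbf{G}_1$, so $\mathbb{V}_{\mathbf{G}_2}\mathbb{V}_{\mathbf{G}_1}=\mathbb{V}_{\mathbf{G}_2\vartriangleleft \mathbf{G}_1}$. This establishes that the map is a group homomorphism (the image lies in the invertible upper block-triangular matrices with $I$'s on the first and third diagonal blocks and $N$ invertible, which form a group).

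Injectivity is then trivial: the blocks $M, K, L, N$ are read off directly from the entries of $\mathbb{V}_{\mathbf{G}}$, so $\mathbb{V}_{\mathbf{G}}=\mathbb{V}_{\mathbf{G}'}$ forces $\mathbf{G}=\mathbf{G}'$. There is no genuine obstacle in the argument; the content of the lemma is really the observation that the rather intricate-looking series product becomes ordinary matrix multiplication once one adds an extra trivial row and column and parks $K$ in the corner position. The only item one might pause over is that the image truly consists of invertible matrices, but this is immediate since the $(2,2)$-block $N$ is invertible by the definition of $GL_{\vartriangleleft}(\mathfrak{h},\mathfrak{K})$, so each $\mathbb{V}_{\mathbf{G}}$ is an invertible upper block-triangular operator.
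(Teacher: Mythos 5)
Your proposal is correct and coincides with the paper's own (very brief) argument: the paper simply states that one readily checks $\mathbb{V}_{\mathbf{G}_{2}}\mathbb{V}_{\mathbf{G}_{1}}=\mathbb{V}_{\mathbf{G}_{2}\vartriangleleft \mathbf{G}_{1}}$, which is exactly the block-multiplication you carry out, and your identity and injectivity checks are the routine remaining details. Nothing is missing.
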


One readily checks that $\mathbb{V}_{\mathbf{G}_{2}}\mathbb{V}_{\mathbf{G}%
_{1}}=\mathbb{V}_{\mathbf{G}_{2}\vartriangleleft \mathbf{G}_{1}}$, and $%
\mathbb{V}_{\mathbf{G}}^{-1}=\mathbb{V}_{\mathbf{G}^{\dag }}$

This representation is the basis for Belavkin's formalism of quantum
stochastic calculus \cite{Bel98},\cite{TSDoklady}. The Lie algebra of GL$%
_{\vartriangleleft }\left( \mathfrak{h},\mathfrak{K}\right) $ (in the
Belavkin representation) consists of matrices
\begin{equation*}
\mathbb{H}=\left[
\begin{array}{ccc}
0 & \mu & \kappa \\
0 & \nu & \lambda \\
0 & 0 & 0
\end{array}
\right]
\end{equation*}
where now the entries $\kappa ,\lambda ,\mu ,\nu $ are operators and the
exponential map is then $\exp \left( \mathbb{H}\right) =\mathbb{V}_{\mathbf{G%
}}$ with the entries $K,L,M,N$ given by
\begin{equation}
\begin{array}{ll}
K=\kappa +\mu e_{2}\left( \nu \right) \lambda & M=\mu e_{1}\left( \nu
\right) , \\
L=e_{1}\left( \nu \right) \lambda , & N=e^{\nu },
\end{array}
\label{decapitated}
\end{equation}
where we encounter the `decapitated exponential' functions being the entire
analytic functions $e_{1}\left( z\right) =\dfrac{e^{z}-1}{z}$, $e_{2}\left(
z\right) =\dfrac{e^{z}-1-z}{z^{2}}$.

With an abuse of notation we shall take the Lie algebra of GL$%
_{\vartriangleleft }\left( \mathfrak{h},\mathfrak{K}\right) $ to be the
vector space $\mathfrak{gl}_{\vartriangleleft }\left( \mathfrak{h},%
\mathfrak{K}\right) $ of operators $\mathbf{H}=\left[
\begin{array}{cc}
\kappa & \mu \\
\lambda & \nu
\end{array}
\right] $ with entries matched with the representation element $\mathbb{H}$
above and Lie bracket
\begin{equation*}
\left[ \mathbf{H}_{2},\mathbf{H}_{1}\right] =\left[
\begin{array}{cc}
\kappa _{2}\lambda _{1}-\kappa _{1}\lambda _{2} & \mu _{2}\nu _{1}-\mu
_{1}\xi _{2} \\
\nu _{2}\lambda _{1}-\nu _{1}\lambda _{2} & \left[ \nu _{2},\nu _{1}\right]
\end{array}
\right] .
\end{equation*}
With this convention, the exponential map $\widehat{\exp }$ from $%
\mathfrak{gl}_{\vartriangleleft }\left( \mathfrak{h},\mathfrak{K}\right) $
to GL$_{\vartriangleleft }\left( \mathfrak{h},\mathfrak{K}\right) $ takes $%
\mathbf{H}=\left[
\begin{array}{cc}
\kappa & \mu \\
\lambda & \nu
\end{array}
\right] $ to $\mathbf{G}=\left[
\begin{array}{cc}
K & M \\
L & N-I
\end{array}
\right] $ with entries given by (\ref{decapitated}), and this corresponds to
\begin{equation*}
\widehat{\exp }\left( \mathbf{H}\right) \triangleq \sum_{n=1}^{\infty }\frac{%
1}{n!}\mathbf{H}\left( \mathbf{\hat{\delta}H}\right) ^{n-1}.
\end{equation*}

The Lie algebra for the subgroup $\mathfrak{G}\left( \mathfrak{h},%
\mathfrak{K}\right) $ will have elements $\kappa =-i\eta $ and $\nu
=-i\sigma $ with $\eta \in \mathfrak{B}_{\text{s.a.}}\left( \mathfrak{h}%
\right) $ and $\sigma \in \mathfrak{B}_{\text{s.a.}}\left( \mathfrak{h}%
\otimes \mathfrak{K}\right) $, while $\lambda \in \mathfrak{B}\left( %
\mathfrak{h},\mathfrak{h}\otimes \mathfrak{K}\right) $ is arbitrary but with
$\mu =-\lambda ^{\dag }$. The exponential map then leads to the element with
Hudson-Parthasarathy parameters
\begin{equation*}
\left( S,L,H\right) =\left( e^{-i\sigma },e_{1}\left( -i\sigma \right)
\lambda ,\eta +\lambda ^{\dag }\text{Im}\{e_{2}\left( -i\sigma \right)
\}\lambda \right) .
\end{equation*}

\section{Lie-Trotter Formulas}

We set $\Delta ^{2}=\left\{ \left( t,s\right) :t\geq s\geq 0\right\} \subset
\mathbb{R}^{2}$, with each element $\left( t,s\right) \in \Delta ^{2}$
determining an associated interval $[s,t)$ in $[0,\infty )$. Let $%
\mathfrak{A}$ be a *-algebra with a fixed topology, which for concreteness
we may take as acting on some common domain of a Hilbert space.

\begin{definition}
Given an $\mathfrak{A}$-valued function $V\left( \cdot ,\cdot \right) $ on $%
\Delta ^{2}$ we set
\begin{equation}
\left[ V\right] _{\mathcal{P}}\left( t,s\right) \triangleq V\left(
t,t_{n}\right) V\left( t_{n},t_{n-1}\right) \cdots V\left(
t_{2},t_{1}\right) V\left( t_{1},s\right)
\end{equation}
where $\mathcal{P}=\left\{ t>t_{n}>t_{n-1}>\cdots >t_{1}>s\right\} $ is a
partition of the interval $\left[ s,t\right] $. The grid size is $\left|
\mathcal{P}\right| =\max_{k}\left( t_{k+1}-t_{k}\right) $ and we say that
the limit
\begin{equation*}
\lim_{\left| \mathcal{P}\right| \rightarrow 0}\left[ V\right] _{\mathcal{P}%
}\left( t,s\right)
\end{equation*}
exists if $\left[ V\right] _{\mathcal{P}}\left( t,s\right) $ converges in
the topology to a fixed element of $\mathfrak{A}$ independently of the
sequence of partitions used. If the limit is well defined for all $t>s\geq 0$
then we shall write the corresponding two-parameter function as $%
\lim_{\left| \mathcal{P}\right| \rightarrow 0}\left[ V\right] _{\mathcal{P}%
}\left( \cdot ,\cdot \right) $.
\end{definition}

\subsection{Examples}

\subsubsection{Trivial}

If we start with a quantum stochastic exponential $V=V_{\mathbf{G}}$, the
flow property implies that we trivially have $\left[ V_{\mathbf{G}}\right] _{%
\mathcal{P}}\left( t,s\right) =V_{\mathbf{G}}\left( t,s\right) $ for any
partition $\mathcal{P}$.

\subsubsection{Quantum stochastic exponentials}

In the setting of quantum stochastic calculus, we let $G\left( t\right)
=G_{\alpha \beta }\otimes \Lambda ^{\alpha \beta }\left( t\right) $, with $%
G_{\alpha \beta }$ bounded, and set $\left( I+\Delta G\right) \left(
t,s\right) =I+G\left( t\right) -G\left( s\right) $, then
\begin{equation*}
V_{\mathbf{G}}=\lim_{\left| \mathcal{P}\right| \rightarrow 0}\left[ 1+\Delta
G\right] _{\mathcal{P}}.
\end{equation*}

\subsubsection{Holevo's time-ordered exponentials}

In the same setting, we let $H\left( t\right) =H_{\alpha \beta }\otimes
\Lambda ^{\alpha \beta }\left( t\right) $ and set $e^{\Delta H}\left(
t,s\right) =e^{H\left( t\right) -H\left( s\right) }$ then the limit is the
Holevo time-ordered exponential \cite{Holevo92}
\begin{equation*}
\,Y_{\mathbf{H}}=\lim_{\left| \mathcal{P}\right| \rightarrow 0}\left[
e^{\Delta H}\right] _{\mathcal{P}},
\end{equation*}
often written as $\,Y_{\mathbf{H}}\left( t,s\right) =\overleftarrow{\exp }%
\int_{s}^{t}dH\left( \tau \right) $. Holevo established strong convergence
for such limits, including an extension to the situation where $H\left(
t\right) =\int_{0}^{t}H_{\alpha \beta }\left( \tau \right) \otimes d\Lambda
^{\alpha \beta }\left( \tau \right) $ with $H_{\alpha \beta }\left( \cdot
\right) \,$strongly continuous $\mathfrak{B}\left( \mathfrak{h}\right) $%
-valued functions with the $H_{i0}\left( \cdot \right) $ and $H_{0j}\left(
\cdot \right) $ square integrable, and the $H_{ij}\left( \cdot \right) $
integrable.

We should think of the $\mathbf{H=}\left[ H_{\alpha \beta }\right] $ of the
Holevo time-ordered exponential as an element of the Lie algebra $%
\mathfrak{gl}_{\vartriangleleft }\left( \mathfrak{h},\mathfrak{K}\right) $.
In particular, we have the following result.

\begin{lemma}
The Holevo time-ordered exponential $\,Y_{\mathbf{H}}$ is equivalent to the
quantum stochastic exponential $V_{\mathbf{G}}$\ where $\mathbf{G}=\widehat{%
\exp }\left( \mathbf{H}\right) $.
\end{lemma}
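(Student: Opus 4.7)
I would show that $Y_{\mathbf{H}}$ satisfies the same left Hudson-Parthasarathy QSDE (\ref{eq diff eq}) as $V_{\mathbf{G}}$ with $\mathbf{G} = \widehat{\exp}(\mathbf{H})$, and then invoke the uniqueness statement in the HP theorem to conclude equality. Holevo's convergence result cited in the excerpt already guarantees that the partition limit defining $Y_{\mathbf{H}}$ exists, so the remaining task is to identify its quantum stochastic generator.

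\textbf{Key computation.} The defining product gives $Y_{\mathbf{H}}$ a multiplicative cocycle property, so infinitesimally $Y_{\mathbf{H}}(t+dt,s) = e^{H(t+dt)-H(t)}\, Y_{\mathbf{H}}(t,s)$. Expanding the left-hand factor as $e^{dH} = I + \sum_{n\ge 1}\tfrac{1}{n!}(dH)^n$ and applying the quantum It\^o table
\[
d\Lambda^{\alpha\beta}\,d\Lambda^{\mu\nu} = \hat{\delta}_{\beta\mu}\, d\Lambda^{\alpha\nu}
\]
inductively on $n$ yields
\[
(dH)^n = \bigl[\mathbf{H}(\hat{\delta}\mathbf{H})^{n-1}\bigr]_{\alpha\beta}\otimes d\Lambda^{\alpha\beta},\qquad n\ge 1.
\]
Summing with weights $1/n!$ reproduces exactly the series defining $\widehat{\exp}(\mathbf{H})$, so that
\[
e^{dH}-I \;=\; \sum_{n=1}^{\infty}\frac{1}{n!}\bigl[\mathbf{H}(\hat{\delta}\mathbf{H})^{n-1}\bigr]_{\alpha\beta}\otimes d\Lambda^{\alpha\beta} \;=\; G_{\alpha\beta}\otimes d\Lambda^{\alpha\beta} \;=\; dG.
\]
Therefore $dY_{\mathbf{H}} = dG\cdot Y_{\mathbf{H}}$ with $Y_{\mathbf{H}}(s,s)=I$, which is precisely equation (\ref{eq diff eq}) for the coefficient matrix $\mathbf{G}$.

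\textbf{Rigorous justification and main obstacle.} To convert the formal infinitesimal calculation into a proof, I would compare the partition products $[e^{\Delta H}]_{\mathcal{P}}(t,s)$ and $[I+\Delta G]_{\mathcal{P}}(t,s)$ on a dense domain of exponential vectors $\varepsilon(f)$, verifying that on each subinterval $[t_k, t_{k+1}]$ the discrepancy between the two builders contributes negligibly to the telescoped product as the mesh vanishes. The main obstacle lies with the gauge contributions: the increments $\Delta\Lambda^{ij}$ are unbounded, and their powers do not simply decay in $\Delta t$, so the identity $(dH)^n = [\mathbf{H}(\hat{\delta}\mathbf{H})^{n-1}]_{\alpha\beta}\otimes d\Lambda^{\alpha\beta}$ is genuinely infinitesimal and must be reinterpreted for finite increments. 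Exploiting boundedness of the entries of $\mathbf{H}$, explicit control of the action of $d\Lambda^{\alpha\beta}$ on $\varepsilon(f)$, and estimates of the type underlying Holevo's convergence theorem, one obtains the required remainder bounds; uniqueness in the HP theorem then completes the identification $Y_{\mathbf{H}} = V_{\mathbf{G}}$.
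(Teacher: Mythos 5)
Your argument is essentially the paper's own proof: compare the infinitesimal forms $V_{\mathbf{G}}(t+dt,t)=I+dG(t)$ and $Y_{\mathbf{H}}(t+dt,t)=e^{dH(t)}$, expand the exponential, and use the It\^{o} table to collapse $(dH)^n$ to $\bigl[\mathbf{H}(\hat{\delta}\mathbf{H})^{n-1}\bigr]_{\alpha\beta}\otimes d\Lambda^{\alpha\beta}$, which identifies $\mathbf{G}=\widehat{\exp}(\mathbf{H})$. The additional remarks on partition-product estimates and uniqueness only flesh out the convergence issues that the paper delegates to Holevo's result, so the approach matches and the core computation is correct.
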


\begin{proof}
We observe that the integro-differential equation (\ref{eq diff eq}) can be
given the infinitesimal form
\begin{equation*}
V_{\mathbf{G}}\left( t+dt,t\right) =I+dG\left( t\right)
\end{equation*}
while for the time-ordered exponential we have
\begin{equation*}
\,Y_{\mathbf{H}}\left( t+dt,t\right) =e^{dH\left( t\right) }.
\end{equation*}
For the two to be equal, we need the coefficients of
\begin{equation*}
dG\left( t\right) =dH\left( t\right) +\frac{1}{2!}dH\left( t\right) dH\left(
t\right) +\cdots
\end{equation*}
to coincide, but from the It\={o} table this implies $\mathbf{G}=\widehat{%
\exp }\left( \mathbf{H}\right) $.
\end{proof}

\subsection{The Quantum Stochastic Lie-Trotter Formula}

\begin{definition}
Given $\mathfrak{A}$-valued functions $V_{1}\left( \cdot ,\cdot \right) $
and $V_{2}\left( \cdot ,\cdot \right) $ on $\Delta ^{2}$, we define their
product $V_{2}\cdot V_{1}$ \textit{interval-wise}, that is
\begin{equation}
\left( V_{2}\cdot V_{1}\right) \left( t,s\right) \triangleq V_{2}\left(
t,s\right) V_{1}\left( t,s\right) .  \label{eq: intervalwise product}
\end{equation}
\end{definition}

Note that the product $V_{2}\cdot V_{1}$ will not generally satisfy the flow
property even when $V_{1}$ and $V_{2}$ do, with the result the limit $%
\lim_{\left| \mathcal{P}\right| \rightarrow 0}\left[ V_{2}\cdot V_{1}\right]
_{\mathcal{P}}\left( t,s\right) $ may now not be trivial.

As an example, take the algebra of $n\times n$ matrices $\mathfrak{A}%
=M_{n}\left( \mathbb{C}\right) $ and define $U_{A}\left( t,s\right)
=e^{\left( t-s\right) A}$, then the Lie product formula $\lim_{n\rightarrow
\infty }\left( e^{tA/n}e^{tB/n}\right) ^{n}=e^{t\left( A+B\right) }$ can be
recast in the form
\begin{equation*}
\lim_{\left| \mathcal{P}\right| \rightarrow 0}\left[ U_{A}\cdot U_{B}\right]
_{\mathcal{P}}=U_{A+B}.
\end{equation*}
The extension to the algebra of operators over a Hilbert space with strong
operator topology was subsequently given by Trotter. For instance, if $%
A=-iH_{1}$ and $B=-iH_{2}$ where $H_{1}$ and $H_{2}$ are self-adjoint with $%
H_{1}+H_{2}$ essentially self-adjoint on the overlap of their domains then
the strong limit exists (Theorem VIII.31 \cite{RS I}). The case of strongly
continuous contractive semigroups on Banach spaces is given as Theorem X.5.1
in \cite{RS II}.

We are now able to formulate our main result.

\begin{theorem}
\label{thm: Lie-Trotter Series Product} Let $\mathbf{G}_{1}$ and $\mathbf{G}%
_{2}$ be a pair of bounded coefficient matrices on the same
Hudson-Parthasarathy space, then in the strong operator topology
\begin{equation}
\lim_{\left| \mathcal{P}\right| \rightarrow 0}\left[ V_{\mathbf{G}_{2}}\cdot
V_{\mathbf{G}_{1}}\right] _{\mathcal{P}}=V_{\mathbf{G}_{2}\vartriangleleft
\mathbf{G}_{1}}.  \label{eq: Lie-Trotter series product}
\end{equation}
\end{theorem}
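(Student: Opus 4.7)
The plan rests on the infinitesimal identity $dG = dG_{1} + dG_{2} + dG_{2}\,dG_{1}$ of the previous Lemma, together with a telescoping/Gronwall argument adapted to quantum stochastic calculus. Since vectors of the form $u \otimes \varepsilon(f)$ (with $u \in \mathfrak{h}$ and $f \in L_{\mathfrak{K}}^{2}[0,\infty)$) are total, it suffices to prove strong convergence on them. Set $V := V_{\mathbf{G}_{2} \vartriangleleft \mathbf{G}_{1}}$, and for a partition $\mathcal{P} = \{s = t_{0} < t_{1} < \cdots < t_{N+1} = t\}$ write $W_{k} := V_{\mathbf{G}_{2}}(t_{k+1},t_{k})\, V_{\mathbf{G}_{1}}(t_{k+1},t_{k})$ and $U_{k} := V(t_{k+1},t_{k})$. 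The flow property of $V$ gives $V(t,s) = U_{N} U_{N-1} \cdots U_{0}$, and the standard telescope
\begin{equation*}
W_{\mathcal{P}}(t,s) - V(t,s) = \sum_{k=0}^{N} (W_{N} \cdots W_{k+1})\,(W_{k} - U_{k})\,(U_{k-1} \cdots U_{0})
\end{equation*}
reduces matters to the single-subinterval defect $W_{k} - U_{k}$ together with uniform control of the trailing partial products.

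For the single-subinterval estimate on $[t_{k}, t_{k+1}]$ of length $h_{k}$, I would Picard-expand each factor as
\begin{equation*}
V_{\mathbf{G}_{i}}(t_{k+1},t_{k}) = I + \Delta_{k} G_{i} + R_{i}^{(k)}, \qquad R_{i}^{(k)} = \int_{t_{k}}^{t_{k+1}} dG_{i}(\tau)\,[V_{\mathbf{G}_{i}}(\tau, t_{k}) - I],
\end{equation*}
with $\Delta_{k} G_{i} = G_{i}(t_{k+1}) - G_{i}(t_{k})$, and multiply to obtain
\begin{equation*}
W_{k} = I + \Delta_{k} G_{1} + \Delta_{k} G_{2} + \Delta_{k} G_{2} \cdot \Delta_{k} G_{1} + (\text{cross terms involving } R_{i}^{(k)}).
\end{equation*}
The quantum It\^{o} table packaged in the Lemma supplies the decomposition $\Delta_{k} G_{2} \cdot \Delta_{k} G_{1} = \int_{t_{k}}^{t_{k+1}} d(\mathbf{G}_{2} \hat{\delta} \mathbf{G}_{1})(\tau) + M_{k}$, where $M_{k}$ is the ``off-diagonal'' iterated stochastic integral. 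Since the generator of $V$ is exactly $dG = dG_{1} + dG_{2} + d(\mathbf{G}_{2} \hat{\delta} \mathbf{G}_{1})$, the first-order single-integral parts of $W_{k}$ and $U_{k}$ cancel identically, and $W_{k} - U_{k}$ reduces to a sum of genuinely second-order quantum stochastic integrals ($R_{i}^{(k)}$, $M_{k}$, and their cross products).

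Substituting back into the telescope, uniform boundedness of the partial products $W_{N} \cdots W_{k+1}$ and $U_{k-1} \cdots U_{0}$ on exponential-vector cylinders — which follows from a Gronwall estimate for QSDEs with bounded coefficients — majorises the global defect $\|(W_{\mathcal{P}} - V)(t,s)(u \otimes \varepsilon(f))\|$ by a sum of quantum stochastic integrals whose Hudson-Parthasarathy second-fundamental-formula bounds tend to $0$ as $|\mathcal{P}| \to 0$. The main technical obstacle is precisely this local It\^{o} estimate: the increments $\Delta_{k} A^{\dag}$ and $\Delta_{k} \Lambda$ act with ``size'' only $O(h_{k}^{1/2})$ on exponential vectors, so the naive norm bound on $W_{k} - U_{k}$ is merely $O(h_{k})$, which is insufficient once summed over the $\sim 1/|\mathcal{P}|$ subintervals. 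What rescues the argument is the exact cancellation of the first-order integrals exhibited above, together with the isometry estimate for the residual second-order integrals, whose accumulated norm scales as a quadratic variation and vanishes in the mesh limit.
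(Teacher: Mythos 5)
Your opening move --- the increment identity $dG=dG_{1}+dG_{2}+dG_{2}\,dG_{1}$ and the cancellation of the first-order integrals --- is in fact the entirety of the argument the paper itself gives: its proof is the heuristic observation that $\left( I+dG_{2}\right)\left( I+dG_{1}\right)=I+dG$ is the infinitesimal form of $V_{\mathbf{G}_{2}\vartriangleleft \mathbf{G}_{1}}$. The gap lies in the part where you try to upgrade this to a rigorous telescoping proof. After the first-order cancellation, the per-interval defect $W_{k}-U_{k}$ is a sum of second-order quantum stochastic integrals, and on a vector $u\otimes \varepsilon(f)$ the Hudson--Parthasarathy fundamental estimate gives only $\| (W_{k}-U_{k})\,u\otimes\varepsilon(f)\| =O(h_{k})$, since each noise increment contributes merely $O(\sqrt{h_{k}})$; summed over the $\sim 1/\left|\mathcal{P}\right|$ subintervals this is $O(t-s)$, not $o(1)$ --- precisely the obstacle you yourself flag. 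The rescue you invoke (an ``isometry estimate'' making the ``accumulated norm scale as a quadratic variation'') is not substantiated: quantum stochastic integrals against $dA^{\dag}$, $d\Lambda$, $dt$ obey estimates, not an It\={o} isometry, on exponential vectors, and in your telescope each defect is sandwiched between $W_{N}\cdots W_{k+1}$ and $U_{k-1}\cdots U_{0}$, operators acting nontrivially on the shared initial space $\mathfrak{h}$ and on the future noise, so the contributions of different subintervals are not orthogonal and there is no a priori reason their norms should add in quadrature. Without this step the argument does not close.

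There is a second, related problem: for general bounded coefficient matrices (which is what the theorem asserts) the solutions $V_{\mathbf{G}_{i}}(t,s)$ need not be bounded operators (e.g. the pure creation equation $dV=dA^{\dag}V$ has the unbounded solution $e^{A^{\dag}(t)}$), so ``uniform control of the trailing partial products'' cannot be an operator-norm statement; and the vectors those products are applied to in your telescope, $(W_{k}-U_{k})(U_{k-1}\cdots U_{0})\,u\otimes\varepsilon(f)$, are no longer exponential vectors, so a Gronwall bound for QSDEs on the exponential domain does not apply to them directly. A route that does work is to avoid the crude telescope: set up a Duhamel/variation-of-constants comparison so that the total difference $[V_{\mathbf{G}_{2}}\cdot V_{\mathbf{G}_{1}}]_{\mathcal{P}}(t,s)-V_{\mathbf{G}_{2}\vartriangleleft\mathbf{G}_{1}}(t,s)$ is expressed as a single quantum stochastic integral over $[s,t]$ whose adapted integrand involves factors $V_{\mathbf{G}_{i}}(\tau,t_{k})-I$ and is therefore uniformly $O(\sqrt{\left|\mathcal{P}\right|})$ on exponential vectors, and then apply the fundamental estimate and Gronwall once over $[s,t]$; alternatively, follow the strategy of Lindsay and Sinha, proving convergence of matrix elements between exponential vectors of step functions (which reduces to semigroup/Chernoff-type arguments on $\mathfrak{h}$) together with uniform estimates that upgrade this to strong convergence.
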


Similarly we find $\lim_{\left| \mathcal{P}\right| \rightarrow 0}\left[ V_{%
\mathbf{G}_{m}}\cdot \ldots \cdot V_{\mathbf{G}_{2}}\cdot V_{\mathbf{G}_{1}}%
\right] _{\mathcal{P}}=V_{\mathbf{G}_{m}\vartriangleleft \cdots
\vartriangleleft \mathbf{G}_{2}\vartriangleleft \mathbf{G}_{1}}$, where the
interval-wise multiple products are defined in the obvious way.

\begin{proof}
To see where this comes from, we note from the infinitesimal form that $%
V=\lim_{\left| \mathcal{P}\right| \rightarrow 0}\left[ V_{\mathbf{G}%
_{2}}\cdot V_{\mathbf{G}_{1}}\right] _{\mathcal{P}}$ should satisfy the
analogous equation
\begin{equation*}
V\left( t+dt,t\right) =\left( I+dG_{2}\left( t\right) \right) \left(
I+dG_{1}\left( t\right) \right) \equiv I+dG\left( t\right)
\end{equation*}
where $dG=dG_{1}+dG_{2}+dG_{2}dG_{1}$, but by (\ref{eq:series_increments})
we recognize this as just the infinitesimal generator of $V_{\mathbf{G}%
_{2}\vartriangleleft \mathbf{G}_{1}}$. In contrast to the traditional
Lie-Trotter formulas, the above limit depends on the order of $V_{\mathbf{G}%
_{2}}\cdot V_{\mathbf{G}_{1}}$ and is therefore asymmetric under interchange
of $V_{\mathbf{G}_{2}}$ and $V_{\mathbf{G}_{1}}$.
\end{proof}

\subsection{Special Cases}

\subsubsection{Lie-Trotter formula}

The special case $\mathbf{G}_{i}=\left[
\begin{array}{cc}
K_{i} & 0 \\
0 & 0
\end{array}
\right] $ recovers the usual Lie-Trotter formulas.

\subsubsection{Separate Channels}

Let $\mathbf{G}_{i}=\left[
\begin{array}{cc}
K_{i} & M_{i} \\
L_{i} & N_{i}-I
\end{array}
\right] $ be coefficient matrices with common initial space $\mathfrak{h}$
but different multiplicity spaces $\mathfrak{K}_{i}$. We combine the
multiplicity space into a single space $\mathfrak{K}=\mathfrak{K}_{1}\oplus %
\mathfrak{K}_{2}$ and ampliate both coefficient matrices as follows:
\begin{equation*}
\mathbf{\tilde{G}}_{1}=\left[
\begin{array}{ccc}
K_{1} & M_{1} & 0 \\
L_{1} & N_{1}-I_{1} & 0 \\
0 & 0 & 0
\end{array}
\right] ,\mathbf{\tilde{G}}_{2}=\left[
\begin{array}{ccc}
K_{2} & 0 & M_{2} \\
0 & 0 & 0 \\
L_{2} & 0 & N_{2}-I_{2}
\end{array}
\right]
\end{equation*}
then
\begin{equation*}
\mathbf{\tilde{G}}_{2}\vartriangleleft \mathbf{\tilde{G}}_{1}=\left[
\begin{array}{ccc}
K_{1}+K_{2} & M_{1} & M_{2} \\
L_{1} & N_{1}-I_{1} & 0 \\
L_{2} & 0 & N_{2}-I_{2}
\end{array}
\right] .
\end{equation*}
The right hand side is taken as the definition of the concatenation\ $%
\mathbf{G}_{1}\boxplus \mathbf{G}_{2}$ of the two separate coefficient
matrices: this is consistent with the definition of concatenation introduced
earlier for model matrices. Theorem (\ref{thm: Lie-Trotter Series Product})
then implies that
\begin{equation*}
\lim_{\left| \mathcal{P}\right| \rightarrow 0}\left[ V_{\mathbf{\tilde{G}}%
_{2}}\cdot V_{\mathbf{\tilde{G}}_{1}}\right] _{\mathcal{P}}=V_{\mathbf{G}%
_{2}\boxplus \mathbf{G}_{1}}.
\end{equation*}
This is equivalent to the result derived by Lindsay and Sinha \cite{LS10}. We should also mention
the recent work of Das, Goswami and Sinha indicates that the Trotter formula should also hold
at the level of flows \cite{DGS10}.

\bigskip

\textbf{Acknowledgement:} The authors are grateful to Professor Kalyan Sinha
for presenting his work on quantum stochastic Lie-Trotter formula \cite{LS10}
during a visit to Aberystwyth within the framework of the UK-India Education
and Research Initiative.

\end{document}